\DeclareMathOperator*{\avg}{Avg}
\title{\textbf{Convergent Under-Approximations of Reachable Sets and Tubes for  Linear Uncertain Systems}} 
\author{Mohamed Serry
 \thanks{Mohamed Serry is with the Department of Mechanical and Mechatronics Engineering, University of Waterloo, Waterloo, Ontario, Canada.}}   
\date{}   
\begin{document}
\maketitle
\begin{abstract}              
 In this note, we propose a  method to under-approximate finite-time reachable sets and tubes for a class of continuous-time linear uncertain systems. The class under consideration is the  linear time-varying (LTV) class with integrable  time-varying system matrices and uncertain initial and input values belonging to known convex compact sets. The proposed method depends upon the iterative use of constant-input reachable sets which results in convergent under-approximations in the sense of Hausdorff distance. We illustrate our approach through two numerical examples.
\end{abstract}

\section{Introduction}
 In  recent years, the field of reachability analysis has received intensive research attention due to its significant role in  formal verification and control synthesis \cite{Althoff10,i14sym}.
 This resulted in a literature rich in various methods to approximate reachable sets and tubes for various classes of systems \cite{AsarinDangFrehseGirardLeGuernicMaler06}. 
 In the context of reachability analysis, approximation techniques can be classified into over-approximation approaches that envelope reachable sets and tubes (see, e.g.,  \cite{Althoff10,LeGuernicGirard09,ScottBarton12,SerryReissig18aC})
 and under-approximation approaches that inner bound reachable sets and tubes (see, e.g., \cite{GoubaultPutot17,KurzhanskiVaraiya00b,Varaiya98,XueSheEaswaran16}). In comparison to over-approximation methods in the literature, under-approximation approaches are relatively  underdeveloped despite their useful role in synthesis and verification (see, e.g., \cite{GoubaultPutot17,XueSheEaswaran16}). This motivates us in this work to develop an  under-approximation method for a class of linear uncertain systems.
 
Up to the knowledge of the author, there have been very few works on
 under-approximation methods for linear systems. For example, polytopic  and ellipsoidal approaches have been proposed  in \cite{Varaiya98}  and \cite{KurzhanskiVaraiya00b}, respectively, which rely on solving several initial value problems to obtain tight under-approximations that touch reachable sets at some boundary points. 
 Besides that, approaches designed for nonlinear systems can be used for linear systems, with the drawback of relatively high computational costs, where interval arithmetic \cite{GoubaultPutot17} and coverings of boundaries of reachable sets \cite{XueSheEaswaran16} are utilized.

In \cite{Veliov92}, second-order approximations  of reachable sets have been proposed, for linear uncertain systems, which rely on the use of local third-order approximations of constant-input reachable sets. In this work, we aim at utilizing constant-input reachable sets, due to their under-approximating nature, to propose a convergent under-approximation method for linear uncertain systems. Moreover, the proposed method is motivated  by the works in \cite{Althoff10,GirardLeGuernicMaler06,LeGuernicGirard09}, where efficient  over-approximations have been introduced for linear time-invariant (LTI) uncertain systems.

The contribution of the current work over the work in \cite{Veliov92} is as follows. First,  the proposed method in the current work is used to under-approximate both reachable sets and tubes rather than only approximate final-time reachable sets. Moreover,  the assumptions imposed on the linear systems under consideration are significantly weaker than  the ones in \cite{Veliov92}. 
The weakened assumptions are motivated by the need of convergent under-approximation methods that are applicable for a wide variety of linear systems ( e.g.,  switched linear systems \cite{Sun05}).

The organisation of this note is as follows. After the introduction, we present the necessary mathematical preliminaries. Then, we introduce the LTV system under consideration, the associated assumptions, and the problem statement. After that, we present the proposed method and prove its  convergence. Moreover, we discuss the implementability of our approach. Finally, the proposed method is demonstrated through two numerical examples.

\section{Preliminaries}
$\mathbb{R}$, $\mathbb{R}_+$, $\mathbb{Z}$ and $\mathbb{Z}_{+}$ denote
the sets of real numbers, non-negative real numbers, integers and
non-negative integers, respectively, and
$\mathbb{N} = \mathbb{Z}_{+} \setminus \{ 0 \}$. Given $a,b \in \mathbb{R}$, with $a\leq b$, $\intcc{a;b}$ denotes the discrete interval $\intcc{a,b}\cap \mathbb{Z}$. Given  $S\subseteq \mathbb{R}$, $\abs{S}$ denotes the Lebesgue measure of $S$ \cite{RoydenFitzpatrick88}; e.g., $\abs{\intcc{a,b}}=b-a$.
Arithmetic operations involving subsets of a linear space $X$ are
defined point-wise, e.g.
$\alpha M \defas \Set{ \alpha y }{ y \in M }$ and the Minkowski sum
$M + N \defas \Set{ y + z }{ y \in M, z \in N }$, if
$\alpha \in \mathbb{R}$ and $M, N \subseteq X$. We denote the identity map $X \to X \colon x \mapsto x$ by $\id$,
where the domain of definition $X$ will always be
clear form the context. By $\| \cdot \|$ we denote any norm on $X$, the norm of a non-empty
subset is defined by $\| M \| \defas \sup_{x \in M} \| x \|$. $d_{H}$ denotes Hausdorff distance (see \cite[Chapter~1,~Section~5,~p.~65]{AubinCellina84}). 
Given norms on $\mathbb{R}^n$ and $\mathbb{R}^m$, the linear space
$\mathbb{R}^{n \times m}$ of $n \times m$ matrices is endowed with the
induced matrix norm,
$\| A \| = \sup_{\| x \| \le 1} \| A x \|$ for
$A \in \mathbb{R}^{n \times m}$.
Given a non-empty set $X \subseteq \mathbb{R}^n$ and a compact
interval $\intcc{a,b} \subseteq \mathbb{R}$, $X^{\intcc{a,b}}$ denotes the set of Lebesgue measurable maps with domain  $\intcc{a,b}$ and  values in $X$.
Integration is
always understood in the sense of Bochner, an extension of Lebesgue integration \cite{HytonenVanNeervenVeraarWeis16}; e.g., a function $f\colon \intcc{a,b}\rightarrow \mathbb{R}^{n}$ is Bochner integrable, or simply integrable, if $f$ is measurable and $\norm{f(\cdot)}$ is Lebesgue integrable.  Almost every (where) is abbreviated as a.e \cite{RoydenFitzpatrick88}.
Given a non-empty $W\subseteq \mathbb{R}^{m}$, a compact interval $\intcc{a,b}$, and an integrable matrix-valued  function
$F\colon \intcc{a,b} \to \mathbb{R}^{n\times m}$, $\int_{a}^{b} F(t) W dt$ denotes the set-valued integral
$
\Set{ \int_{a}^{b} F(t) w(t) dt }{ w \in W^{\intcc{a,b}} }. 
$
\section{Problem Formulation}
In this section, we formulate and state the problem under consideration in this note.
\subsection{System Description}
Consider the LTV system
\begin{equation}\label{eq:LinearSystem}
\dot{x}=A(t)x+B(t)u(t),
\end{equation}
 over the time interval $\intcc{\underline{t},\overline{t}},~-\infty<\underline{t}<\overline{t}<\infty$, where $x(t)\in \mathbb{R}^{n}$ is the system state, $u(t)\in \mathbb{R}^{m}$ is the input, and $A\colon \intcc{\underline{t},\overline{t}}\rightarrow  \mathbb{R}^{n\times n}$ and $B\colon\intcc{\underline{t},\overline{t}}\rightarrow  \mathbb{R}^{n\times m}$  represent the time-varying system matrices. The initial value $x(\underline{t})$ , and the input $u(t)$ are subject to uncertainties. Assume $A(\cdot)$ is integrable. Given an initial value $x(\underline{t})=x_{0}$ and an input signal $u\colon \intcc{\underline{t},\overline{t}}\rightarrow \mathbb{R}^{m}$, such that $B(\cdot)u(\cdot)$ is integrable, the unique solution, $\varphi(\cdot,\underline{t},x_0,u)$, to system \ref{eq:LinearSystem}, generated by  $x_0$ and $u(\cdot)$, on $\intcc{\underline{t},\overline{t}}$ is given by \cite[Theorem~6.5.1,~p.~114]{Lukes82}
\begin{equation}\label{eq:solution}
\varphi(t,\underline{t},x_0,u)=\phi(t,\underline{t})x_{0}+\int_{\underline{t}}^{t}\phi(t,s)B(s)u(s)ds,
\end{equation}
$t\in \intcc{\underline{t},\overline{t}}$. Here,  $\phi\colon \intcc{\underline{t},\overline{t}}\times \intcc{\underline{t},\overline{t}}\rightarrow \mathbb{R}^{n\times n}$ is the state transition matrix-valued function, which is continuous on $\intcc{\underline{t},\overline{t}}\times \intcc{\underline{t},\overline{t}}$ and absolutely continuous in each variable, satisfying $\partial_{t}\phi(t,s)=A(t)\phi(t,s),~a.e.~ t\in \intcc{\underline{t},\overline{t}}$ and $\partial_{s}\phi(t,s)=-\phi(t,s)A(s),~a.e.~s\in \intcc{\underline{t},\overline{t}}$  \cite[Theorem~6.3.2,~p.~109]{Lukes82}. Moreover, $\phi(t,z)\phi(z,s)=\phi(t,s)$ and $\phi(s,s)=\id$ for all $t,s,z\in \intcc{\underline{t},\overline{t}}$.
Finally, $\phi$ satisfies the estimate 
\begin{equation}
\label{e:ExpEstimates}
\begin{split}
\| \phi(t,s) \| \le & \e^{ \int_{s}^{t} \norm{A(z)}dz }
\end{split}
\end{equation}
for all $\underline{t}\leq s\leq t\leq \overline{t}$ \cite[Lemma~6.3.1,~p.~108]{Lukes82}.

\subsection{Assumptions}\label{subseq:Assumptions}
 \begin{enumerate}[(i)]
   \setlength\itemsep{0in}
 \item The time interval $\intcc{\underline{t},\overline{t}}$ is compact and of non-zero length (i.e., $\abs{\intcc{\underline{t},\overline{t}}}\neq 0$).
 \item $A(\cdot)$ is integrable. 
 \item $B(\cdot)$ is measurable satisfying   $\int_{\underline{t}}^{\overline{t}}\norm{B(s)}^{p}ds<\infty$ for some $p\in (1,\infty]$. Note that under the imposed assumption, $B(\cdot)$ is integrable \cite[Corollary~3,~p.~142]{RoydenFitzpatrick88}. 
 \item The uncertain initial value $x(\underline{t})$ and input values $u(t),~t\in \intcc{\underline{t},\overline{t}}$ are assumed to belong to known sets $X_{0}$ and $U$, respectively,
i.e.,
$
x(\underline{t})\in X_{0},~u(t)\in U,~t\in \intcc{\underline{t},\overline{t}}
$.
\item $X_{0}$ and $U$ are non-empty, convex, and compact. 
 \end{enumerate}

\textbf{The problem data  $\underline{t}$ , $\overline{t}$, $A(\cdot)$, $B(\cdot)$, $X_0$ and $U$ are fixed  and the associated assumptions hold throughout this note.}

\subsection{Problem Statement}
Let $\mathcal{R}(t)$ denote \textbf{the reachable set} of system $\ref{eq:LinearSystem}$ at time $t\in \intcc{\underline{t},\overline{t}}$, with starting time $\underline{t}$, initial values in $X_{0}$, and input signals with values in $U$. In other words, 
$
\mathcal{R}(t)= \Set{\varphi(t,\underline{t},x_0,u)}{x_{0}\in X_0,~u(\cdot)\in U^{\intcc{\underline{t},\overline{t}}}},
$
$t\in \intcc{\underline{t},\overline{t}}$, where $\varphi$ is defined as in \ref{eq:solution}. Moreover, let $\mathcal{R}{(\intcc{\underline{t},\overline{t}})}$ denote \textbf{the reachable tube} of system \ref{eq:LinearSystem} over the time interval $\intcc{\underline{t},\overline{t}}$, with initial time $\underline{t}$, initial values in $X_{0}$ and input signals with values in $U$ (i.e., $\mathcal{R}(\intcc{\underline{t},\overline{t}})=\bigcup_{t\in \intcc{\underline{t},\overline{t}}}\mathcal{R}(t)$). Given the problem
data and a time discretization parameter $N$, design a method that  yields
 subsets $\mathcal{S}_{N}\subseteq \mathcal{R}(\overline{t})$ and $\mathcal{T}_{N}\subseteq \mathcal{R}(\intcc{\underline{t},\overline{t}})$  satisfying 
$ \mathcal{S}_{N} \to \mathcal{R}(\overline{t})$ and $ \mathcal{T}_{N} \to \mathcal{R}(\intcc{\underline{t},\overline{t}})$, in the sense of Hausdorff distance,
as $N \to \infty$.
\section{Under-Approximations}
In the section, we address the problem stated in the previous section by proposing a convergent under-approximation method.
\subsection{Proposed Method}
In this subsection, we introduce the proposed method and provide a thorough explanation of it.
Let $N\in \mathbb{N}$, and define the sets $\{\Lambda_{i}^{N}\}_{i=0}^{N}$ as follows:
\begin{equation}
\label{eq:ProposedMethod}
 \Lambda_{0}^{N}=X_{0},~\Lambda_{i}^{N}=\phi(t_{i},t_{i-1})\Lambda_{i-1}^{N}+ W_{i},~i\in \intcc{1;N},
\end{equation}
where
\begin{equation}\label{e:t&W}
\begin{split}
t_{i}&=\underline{t}+i{(\overline{t}-\underline{t})}/{N},~i\in \intcc{0;N},\\ W_{i}&=\left(\int_{t_{i-1}}^{t_{i}}\phi(t_{i},s)B(s)ds\right) U,~i\in \intcc{1;N}.
\end{split}
\end{equation}
Each  $\Lambda_{i}^{N}$, $i \in \intcc{1;N}$,  corresponds to the reachable set at time $t_{i}$, with initial time $t_{i-1}$, initial values in $\Lambda_{i-1}^{N}$,  and \underline{constant} input signals with values in $U$. Therefore, $\{\Lambda_{i}^{N}\}_{i=0}^{N}$  and  $\bigcup_{i=0}^{N}\Lambda_{i}^{N}$ under-approximate
$\{\mathcal{R}(t_{i})\}_{i=0}^{N}$ and
$\mathcal{R}(\intcc{\underline{t},\overline{t}})$, respectively. Convergence can be explained, \underline{informally}, as follows. The iterative use of constant inputs in the construction of  $\{\Lambda_{i}^{N}\}_{i=0}^{N}$ results in reachable sets under \underline{step} input signals. As the value of $N$ increases, the step input signals  approximate  measurable input signals more accurately 
and therefore, $\{\Lambda_{i}^{N}\}_{i=0}^{N}$ converge to the exact reachable sets $\{\mathcal{R}(t_{i})\}_{i=0}^{N}$. The set  $\bigcup_{i=0}^{N}\Lambda_{i}^{N}$ can be thought of as a set-valued step approximation of the set-valued function $\mathcal{R}(\cdot)$. Due to the continuity of  $\mathcal{R}(\cdot)$, in the sense of Hausdorff distance, the accuracy of the step approximation increases as $N$  increases which implies the convergence of $\bigcup_{i=0}^{N}\Lambda_{i}^{N}$ to $\mathcal{R}(\intcc{\underline{t},\overline{t}})$.
\subsection{Main Results}
Now, we introduce theorems \ref{th:UnderApproximationReachSet} and \ref{th:UnderApproximationReachTube}, which are the main results of this work,  to validate the under-approximations obtained by \ref{eq:ProposedMethod} and to illustrate their first order convergence.
\begin{theorem}[Convergent Under-Approximations of Reachable Sets]\label{th:UnderApproximationReachSet}
Let $N$ be a positive integer and $\{\Lambda_{i}^{N}\}_{i=0}^{N}$ and $\{t_{i}\}_{i=0}^{N}$ be defined as in \ref{eq:ProposedMethod}. Then, for all $i\in \intcc{0;N}$,
$
\Lambda_{i}^{N}\subseteq \mathcal{R}(t_{i})
$
and
$
\Lambda_{i}^{N}\rightarrow \mathcal{R}(t_{i})
$ as $N\rightarrow \infty$.
\end{theorem}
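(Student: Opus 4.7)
The plan is to handle the inclusion by induction on $i$ and the Hausdorff convergence by constructing explicit approximants whose error is controlled by a quantity independent of the point being approximated.

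For the inclusion $\Lambda_i^N \subseteq \mathcal{R}(t_i)$, I would induct on $i$. The base case $\Lambda_0^N = X_0 = \mathcal{R}(\underline{t})$ is immediate. For the inductive step, an arbitrary point of $\Lambda_i^N$ has the form $\phi(t_i,t_{i-1}) z + \bigl(\int_{t_{i-1}}^{t_i}\phi(t_i,s)B(s)\,ds\bigr) v$ with $z \in \Lambda_{i-1}^N$ and $v \in U$. By the inductive hypothesis, $z = \varphi(t_{i-1},\underline{t},x_0,u)$ for some admissible pair $(x_0,u)$; extending $u$ by the constant signal $v$ on $\intcc{t_{i-1},t_i}$ produces an admissible input $\hat{u} \in U^{\intcc{\underline{t},\overline{t}}}$, and formula \ref{eq:solution} together with the semigroup identity $\phi(t_i,t_{i-1})\phi(t_{i-1},s) = \phi(t_i,s)$ shows that the given point equals $\varphi(t_i,\underline{t},x_0,\hat{u}) \in \mathcal{R}(t_i)$.

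For Hausdorff convergence, since $\Lambda_i^N \subseteq \mathcal{R}(t_i)$ it suffices to bound $\sup_{y \in \mathcal{R}(t_i)} \inf_{z \in \Lambda_i^N} \|y - z\|$ by something tending to zero. Given $y = \phi(t_i,\underline{t}) x_0 + \int_{\underline{t}}^{t_i}\phi(t_i,s)B(s) u(s)\,ds \in \mathcal{R}(t_i)$, I set $u_j := \frac{N}{\overline{t}-\underline{t}}\int_{t_{j-1}}^{t_j} u(s)\,ds$, which lies in $U$ by convexity and closedness, and define
\[
y^N := \phi(t_i,\underline{t}) x_0 + \sum_{j=1}^i \Bigl(\int_{t_{j-1}}^{t_j}\phi(t_i,s)B(s)\,ds\Bigr) u_j.
\]
A short induction on $i$ using the semigroup property verifies $y^N \in \Lambda_i^N$. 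The crucial estimate then exploits a barycentric cancellation: because $\int_{t_{j-1}}^{t_j}[u(s)-u_j]\,ds = 0$, I may subtract from $\phi(t_i,s)B(s)$ inside the $j$-th integrand any constant matrix $M_j^N$ without changing its value. Choosing $M_j^N$ to be the mean of $\phi(t_i,\cdot)B(\cdot)$ on $\intcc{t_{j-1},t_j}$ and using $\|u(s)-u_j\| \le 2\|U\|$ yields
\[
\|y - y^N\| \le 2\|U\| \int_{\underline{t}}^{t_i} \|F(s) - F^N(s)\|\,ds,
\]
where $F(s) := \phi(t_i,s)B(s)$ and $F^N$ is the piecewise constant interval-average of $F$ on the grid $\{t_j\}$.

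The right-hand side is independent of $y$, and since $F$ is integrable (product of the bounded continuous $\phi(t_i,\cdot)$ and the integrable $B(\cdot)$), a standard density argument—approximating $F$ in $L^1$ by a continuous function whose piecewise-constant averages converge uniformly—gives $\int_{\underline{t}}^{t_i}\|F - F^N\|\,ds \to 0$, establishing the claimed Hausdorff convergence. The main obstacle is producing this uniformity over the infinite-dimensional class $U^{\intcc{\underline{t},\overline{t}}}$ of admissible inputs; the barycentric choice of $u_j$ is precisely what converts a term depending on $u$ into an $L^1$-approximation statement about $F$ alone, from which uniformity in $u$ is automatic.
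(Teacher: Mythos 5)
Your proof is essentially correct, and its convergence argument takes a genuinely different route from the paper's. The inclusion part coincides in substance with the paper's (the paper argues set-wise via the semigroup lemma and the observation that the constant-input set $\bigl(\int_a^b\phi(b,s)B(s)\,ds\bigr)U$ sits inside the measurable-input set $\int_a^b\phi(b,s)B(s)U\,ds$; you carry out the same computation pointwise). For convergence, the paper sets up the recursion $\gamma_i\le \mathrm{e}^{\int_{t_{i-1}}^{t_i}\|A\|}\gamma_{i-1}+\alpha_i$ for $\gamma_i=d_H(\mathcal{R}(t_i),\Lambda_i^N)$, bounds each local error $\alpha_i$ by the mean oscillation of $\psi=\phi(\underline{t},\cdot)B(\cdot)$ over $[t_{i-1},t_i]$, and controls $\sum_i\alpha_i$ by dominating the truncated sharp maximal function $M^{\#}_{\tau}[\bar\psi]$ with the Hardy--Littlewood maximal function and then invoking the Lebesgue differentiation and dominated convergence theorems; this is exactly where assumption (iii) ($B\in L^p$ with $p>1$) is consumed. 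You instead build, for each trajectory, an explicit piecewise-constant-input approximant via barycentric averaging of the input, and reduce the whole problem to $\int\|F-F^N\|\to 0$ for the piecewise-constant interval averages $F^N$, settled by density of continuous functions in $L^1$ together with the fact that interval averaging is an $L^1$-contraction. Your route avoids the maximal-function machinery entirely and, notably, needs only integrability of $B$, i.e., it would prove the theorem with $p=1$; the two arguments ultimately quantify the same object, since the paper's $\sum_i\alpha_i$ is precisely a bound of the form $\|\psi-\psi^N\|_{L^1}$.

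One loose end to patch: your $F(s)=\phi(t_i,s)B(s)$ depends on $N$ (and on $i$) through $t_i$, so the density argument cannot be applied to ``$F$'' as a single fixed function, and the resulting bound must also be uniform in $i$. Factor $\phi(t_i,s)=\phi(t_i,\underline{t})\,\phi(\underline{t},s)$, so that $F=\phi(t_i,\underline{t})\,\psi$ with $\psi=\phi(\underline{t},\cdot)B(\cdot)$ fixed; since a constant matrix commutes with interval averaging and $\|\phi(t_i,\underline{t})\|\le \mathrm{e}^{\int_{\underline{t}}^{\overline{t}}\|A\|}$, you obtain $\int_{\underline{t}}^{t_i}\|F-F^N\|\,ds\le \mathrm{e}^{\int_{\underline{t}}^{\overline{t}}\|A\|}\int_{\underline{t}}^{\overline{t}}\|\psi-\psi^N\|\,ds$, uniformly in $i$, and the density argument applies to the fixed function $\psi$. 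This is the same normalization the paper performs when it introduces $\psi$.
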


\begin{theorem}[Convergent Under-Approximations of Reachable Tubes]\label{th:UnderApproximationReachTube}
Let $N$ be a positive integer and the sets $\{\Lambda_{i}^{N}\}_{i=0}^{N}$ be defined as in \ref{eq:ProposedMethod}. Then,  $\bigcup_{i=0}^{N}\Lambda_{i}^{N}\subseteq \mathcal{R}(\intcc{\underline{t},\overline{t}})$ and $\bigcup_{i=0}^{N}\Lambda_{i}^{N}\rightarrow \mathcal{R}(\intcc{\underline{t},\overline{t}})$ as $N\rightarrow \infty$.
\end{theorem}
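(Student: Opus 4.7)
The plan is to split the assertion into the containment, which is immediate, and the Hausdorff convergence, which reduces to a one-sided bound. Each $\Lambda_i^N$ is contained in $\mathcal{R}(t_i) \subseteq \mathcal{R}(\intcc{\underline{t},\overline{t}})$ by Theorem \ref{th:UnderApproximationReachSet}, so the union lies inside the tube. Consequently, only the direction
\[
\sup_{y \in \mathcal{R}(\intcc{\underline{t},\overline{t}})} d\bigl(y,\, \textstyle\bigcup_{i=0}^N \Lambda_i^N\bigr) \to 0 \quad \text{as } N \to \infty
\]
requires work.

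Given $y \in \mathcal{R}(t)$ for some $t \in \intcc{\underline{t},\overline{t}}$, I would pick the index $i \in \intcc{1;N}$ with $t \in \intcc{t_{i-1},t_i}$ (and $i = 0$ when $t = \underline{t}$) and use the triangle inequality
\[
d(y, \Lambda_i^N) \le \|y - y_i\| + d_H(\mathcal{R}(t_i), \Lambda_i^N),
\]
where $y_i \in \mathcal{R}(t_i)$ is obtained from $y$ by extending the input that produces $y$ with a fixed $\bar{u} \in U$ on $\intcc{t,t_i}$, namely
\[
y_i = \phi(t_i,t)\, y + \int_t^{t_i} \phi(t_i,s) B(s)\, \bar{u}\, ds.
\]
Using \ref{e:ExpEstimates}, the uniform continuity of $\phi$ on the compact square $\intcc{\underline{t},\overline{t}}^2$, the uniform boundedness of $\mathcal{R}(\cdot)$ (inherited from $\|X_0\|$ and $\|U\|$ via \ref{eq:solution}), and the absolute continuity of $s \mapsto \int_{\underline{t}}^s \|B(\tau)\|\, d\tau$, the term $\|y - y_i\|$ admits a bound $\omega\bigl((\overline{t}-\underline{t})/N\bigr)$ that vanishes as $N \to \infty$, uniformly in $t$ and $y$. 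This is the place where assumption (iii) on $B(\cdot)$ enters: integrability alone already makes the indefinite integral absolutely continuous, so no further regularity is needed.

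The remaining ingredient is $\max_{i \in \intcc{0;N}} d_H(\mathcal{R}(t_i), \Lambda_i^N) \to 0$, a uniform-in-$i$ strengthening of Theorem \ref{th:UnderApproximationReachSet}. I expect this to be the main obstacle: the statement of Theorem \ref{th:UnderApproximationReachSet} gives only pointwise convergence in $i$, so its proof must be revisited and shown to yield a bound of the order $C/N$ with $C$ independent of $i$, consistent with the first-order convergence the paper advertises. Given such a uniform rate, combining it with the continuity estimate above and taking the supremum over $y \in \mathcal{R}(\intcc{\underline{t},\overline{t}})$ and the infimum over $i \in \intcc{0;N}$ yields the desired Hausdorff convergence of the tube approximation.
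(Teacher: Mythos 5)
Your proposal follows essentially the same route as the paper's proof: the containment is immediate from Theorem \ref{th:UnderApproximationReachSet}, and the convergence is reduced, via the triangle inequality through the nearest grid point, to (a) a continuity estimate for $t\mapsto\mathcal{R}(t)$ obtained from the uniform continuity of $\phi$, the uniform bound on $\norm{\mathcal{R}(t)}$, and the smallness of $\int_{t_i}^{t}\norm{B(s)}ds$, and (b) a uniform-in-$i$ version of Theorem \ref{th:UnderApproximationReachSet} (the paper phrases (a) with the semigroup Lemma \ref{lem:SemiGroup} and Hausdorff distances between sets rather than with an explicitly constructed point $y_i$, but this is cosmetic). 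The one item you flag as the main obstacle, (b), is not in fact an extra gap: the proof of Theorem \ref{th:UnderApproximationReachSet} already establishes $d_H(\mathcal{R}(t_i),\Lambda_i^N)\le C\e^{\int_{\underline{t}}^{\overline{t}}\norm{A(z)}dz}\int_{\underline{t}}^{\overline{t}}M_\tau^\#[\bar\psi](s)\,ds$ for \emph{every} $i\in\intcc{0;N}$, with a right-hand side independent of $i$ that vanishes as $N\to\infty$; moreover you do not need the $O(1/N)$ rate you ask for (and none is available under assumption (iii) alone) --- a uniform-in-$i$ bound tending to zero suffices, which is exactly what that estimate provides.
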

To prove theorems \ref{th:UnderApproximationReachSet} and \ref{th:UnderApproximationReachTube}, we resort to the following result which implies the semi-group property of reachable sets.
\begin{lemma}\label{lem:SemiGroup}
Given $a,b\in \intcc{\underline{t},\overline{t}},~ a\leq b$, 
$
\mathcal{R}(b)=\phi(b,a) \mathcal{R}(a)+\int_{a}^{b}\phi(b,s)B(s)Uds.
$
\end{lemma}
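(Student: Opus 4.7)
The plan is to establish the set equality by proving the two inclusions separately, with both directions relying on the explicit solution formula \ref{eq:solution}, the cocycle identity $\phi(b,\underline{t}) = \phi(b,a)\phi(a,\underline{t})$ and $\phi(b,s) = \phi(b,a)\phi(a,s)$, and the measurability closure of the input class $U^{\intcc{\underline{t},\overline{t}}}$ under concatenation.

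For the inclusion $\mathcal{R}(b) \subseteq \phi(b,a)\mathcal{R}(a) + \int_{a}^{b}\phi(b,s)B(s)U\,ds$, I would take an arbitrary $x \in \mathcal{R}(b)$, write $x = \varphi(b,\underline{t},x_0,u)$ for some $x_0 \in X_0$ and $u \in U^{\intcc{\underline{t},\overline{t}}}$, split the integral in \ref{eq:solution} at $s=a$, and then factor out $\phi(b,a)$ from the first piece using the cocycle identity. The bracketed expression is exactly $\varphi(a,\underline{t},x_0,u) \in \mathcal{R}(a)$, and the remaining integral lies in $\int_a^b \phi(b,s)B(s)U\,ds$ since the restriction $u|_{\intcc{a,b}}$ is an element of $U^{\intcc{a,b}}$.

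For the reverse inclusion, I would pick $x = \phi(b,a)y + \int_a^b \phi(b,s)B(s)u(s)\,ds$ with $y \in \mathcal{R}(a)$ and $u \in U^{\intcc{a,b}}$, obtain a representation $y = \varphi(a,\underline{t},x_0,v)$ for some $x_0 \in X_0$ and $v \in U^{\intcc{\underline{t},\overline{t}}}$, and then \emph{concatenate} $v$ on $\intcc{\underline{t},a}$ with $u$ on $\intcc{a,b}$ (extended arbitrarily, e.g.\ by $v$ again, on $\intcc{b,\overline{t}}$) to form a new admissible input $w \in U^{\intcc{\underline{t},\overline{t}}}$. A direct computation with \ref{eq:solution}, again invoking the cocycle identity to pull $\phi(b,a)$ outside the integral over $\intcc{\underline{t},a}$, shows $\varphi(b,\underline{t},x_0,w) = x$, so $x \in \mathcal{R}(b)$.

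The only delicate point is verifying that the concatenated input $w$ is genuinely an element of $U^{\intcc{\underline{t},\overline{t}}}$, i.e.\ that it is Lebesgue measurable; this follows from the fact that a function obtained by gluing measurable maps on finitely many measurable pieces of $\intcc{\underline{t},\overline{t}}$ is itself measurable. Integrability of $B(\cdot)w(\cdot)$ is then inherited from assumption (iii), so the representation \ref{eq:solution} applies to $w$. Once this bookkeeping is in place, everything else reduces to algebraic rearrangement of the integrals.
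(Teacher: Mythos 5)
The paper states this lemma without supplying a proof, so there is nothing to compare against; judged on its own, your argument is correct and complete. Both inclusions are handled the right way: splitting the integral in \ref{eq:solution} at $s=a$ and pulling out $\phi(b,a)$ via the cocycle identity gives the forward direction, and the concatenation construction gives the reverse one. You also correctly flag the only nontrivial bookkeeping — measurability of the glued input $w$ and integrability of $B(\cdot)w(\cdot)$ (the latter follows since $U$ is compact, so $\|B(s)w(s)\|\leq\|B(s)\|\,\|U\|$ with $B(\cdot)$ integrable) — so the representation \ref{eq:solution} indeed applies to $w$ and the computation closes.
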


\begin{proof}[Proof of Theorem \ref{th:UnderApproximationReachSet}]
Let $\tau=(\overline{t}-\underline{t})/N$. Define the set-valued maps $I(\cdot,\cdot)$ and $J(\cdot,\cdot)$ as  $I(b,a)=\int_{a}^{b}\phi(b,s)B(s)dsU$ and $J(b,a)=\int_{a}^{b}\phi(b,s)B(s)Uds$, where $\underline{t}\leq a\leq b \leq \overline{t}$ (note that, by definition $I(b,a)\subseteq J(b,a)$). For convenience, let $I_{i}$ and $J_{i}$ denote the sets $I(t_{i},t_{i-1})$ and $J(t_{i},t_{i-1})$, respectively, $i\in \intcc{1;N}$. The first claim of theorem \ref{th:UnderApproximationReachSet} holds for $i=0$ as $\Lambda_{0}^{N}=\mathcal{R}(\underline{t})=X_{0}$. By induction, assume the first claim holds for $i\in \intcc{0;N-1}$, then it holds  for $i+1$ in place of $i$ as, using lemma \ref{lem:SemiGroup},
\begin{equation}\nonumber
\begin{split}
\Lambda_{i+1}^{N}&= \phi(t_{i+1},t_{i})\Lambda_{i}^{N}+\overbrace{W_{i+1}}^{=I_{i+1}}\\
&\subseteq \phi(t_{i+1},t_{i})\mathcal{R}(t_{i})+J_{i+1}=\mathcal{R}(t_{i+1})  
\end{split}
\end{equation}
which proves the first claim. Now, we prove the second claim. 
First, note that each set $I_{i},~i\in \intcc{1;N}$, can be written as a set-valued integral as follows: 
\begin{equation}\nonumber
I_{i}=\left(\frac{1}{\tau}\int_{t_{i-1}}^{t_{i}}\phi(t_{i},s)B(s)ds\right)\overbrace{(\tau U)}^{=\int_{t_{i-1}}^{t_{i}}Uds}=\int_{t_{i-1}}^{t_{i}}\mathcal{L}_{i}Uds,
\end{equation}
where $\mathcal{L}_{i}=(1/\tau)\int_{t_{i-1}}^{t_{i}}\phi(t_{i},s)B(s)ds$. Define $\gamma_{i}= d_{H}(\mathcal{R}(t_{i}),\Lambda_{i}^{N}),~i\in \intcc{0,N}$ and $\alpha_{i}=d_{H}(I_{i},J_{i}),~i\in \intcc{1;N}$. Then, using lemma \ref{lem:SemiGroup} and estimate \ref{e:ExpEstimates},
 we have
\begin{equation}\nonumber
\begin{split}
\gamma_{i}\leq& \norm{\phi(t_{i},t_{i-1})} \gamma_{i-1}+ d_{H}(I_{i},J_{i})\\
& \leq \e^{\int_{t_{i-1}}^{t_{i}}\norm{A(z)}dz}\gamma_{i-1}+\alpha_{i},~i\in \intcc{1;N},
\end{split}
\end{equation}
 with $\gamma_{0}=0$. It can be shown, using induction, that 
$
\gamma_{i}\leq \sum_{j=1}^{i}\e^{\int_{t_{j}}^{t_{i}}\norm{A(z)}dz}\alpha_{j},~i\in \intcc{0;N},
$
where $\sum_{j=1}^{0}(\cdot)=0$. Consequently, for all $~i\in \intcc{0;N}$, we have
\begin{equation}\label{eq:EstimateGamma}
\gamma_{i}\leq \e^{\int_{\underline{t}}^{\overline{t}}\norm{A(z)}dz}\sum_{j=1}^{N}\alpha_{j}.
\end{equation}
Let $C=\norm{U}\e^{\int_{\underline{t}}^{\overline{t}}\norm{A(z)}dz}$. By  utilizing estimate \ref{e:ExpEstimates} and properties of transition matrices, it can be shown that each $\alpha_{i},i\in \intcc{1;N}$,  satisfies 
\begin{equation}\nonumber
\alpha_{i}\leq C \int_{t_{i-1}}^{t_{i}}\left(\frac{1}{\tau}\int_{t_{i-1}}^{t_{i}}\norm{\psi(s)-\avg_{\intcc{t_{i-1},t_{i}}}(\psi)}ds\right) dz,
\end{equation}
where $\psi(\cdot)=\phi(\underline{t},\cdot)B(\cdot)$ and $\avg_{S}(\cdot)=(1/\abs{S})\int_{S}(\cdot)$ is the integral mean over a measurable subset $S$ (of finite non-zero measure). Let $\bar{\psi}\colon \mathbb{R}\rightarrow \mathbb{R}^{n\times m}$ be an extension of $\psi$ over the real line with zero values outside of $\intcc{\underline{t},\overline{t}}$, i.e.,
$$
\bar{\psi}(t)=\begin{cases}
\psi(t),&~t\in \intcc{\underline{t},\overline{t}},\\
0,& \text{otherwise}. 
\end{cases}
$$
Consequently, we have
\begin{equation}\label{eq:EstimateAlpha}
\alpha_{i}\leq  C \int_{t_{i-1}}^{t_{i}} M^{\#}_{\tau}[\bar{\psi}](s)ds,~i \in \intcc{1;N},
\end{equation}
where $M^{\#}_{\tau}[\bar{\psi}](\cdot)\colon \mathbb{R}\rightarrow \mathbb{R}_{+}$ is a truncated sharp maximal function (see, e.g.,  \cite{Grafakos09}) 
 which is defined as 
$$
M_{\tau}^{\#}[\bar{\psi}](x)=\sup_{S\ni x,~0<\abs{S}\leq \tau} \frac{1}{\abs{S}}\int_S \norm{\bar{\psi}(z)-\avg_{S}(\bar{\psi})}dz,
$$
where the supremum is taken over all \underline{open} intervals $S\subset \mathbb{R}$ that contain $x$, with lengths less than or  equal to $\tau$. Note that $M^{\#}_{\tau}[\bar{\psi}](\cdot)$ is lower semi-continuous and, consequently, measurable. Moreover, $M^{\#}_{\tau}[\bar{\psi}](\cdot)$ satisfies $\int_{\mathbb{R}}\abs{M^{\#}_{\tau}[\bar{\psi}](x)}^{p}dx<\infty$, where $p$ is the same as in subsection \ref{subseq:Assumptions}, assumption (iii). This claim is proved as  follows. Let $M[\bar{\psi}](\cdot)\colon \mathbb{R}\rightarrow \mathbb{R}_{+}$ be the Hardy-Littlewood maximal function  which is defined as
$$
M[\bar{\psi}](x)=\sup_{S\ni x,~0<\abs{S}<\infty} \frac{1}{\abs{S}}\int_S \norm{\bar{\psi}(z)}dz.
$$
Then, it can be easily verified that  $M^{\#}_{\tau}[\bar{\psi}](x)\leq 2 M[\bar{\psi}](x),~x\in \mathbb{R}$. Moreover, using Hardy-Littlewood maximal theorem \cite[Theorem~ 2.3.2,~p.~99]{HytonenVanNeervenVeraarWeis16}, we have $\int_{\mathbb{R}}\abs{M[\bar{\psi}](x)}^{p}dx<\infty$ as  $\int_{\mathbb{R}}\norm{\bar{\psi}(x)}^{p}dx<\infty$ which proves the claim. Consequently,  $M^{\#}_{\tau}[\bar{\psi}](\cdot)$ is integrable over $\intcc{\underline{t},\overline{t}}$ for all $0<\tau \leq \overline{t}-\underline{t}$. Using estimates \ref{eq:EstimateGamma} and \ref{eq:EstimateAlpha}, we have 
$$
\gamma _{i}\leq C\e^{\int_{\underline{t}}^{\overline{t}}\norm{A(z)}dz} \int_{\underline{t}}^{\overline{t}} M^{\#}_{\tau}[\bar{\psi}](s)ds,~i\in \intcc{0;N}.
$$
Note that 
$
M^{\#}_{\tau}[\bar{\psi}](x)\leq 2 T_{\tau}[\bar{\psi}](x),~x\in \mathbb{R}, 
$
where $T_{\tau}[\bar{\psi}](\cdot)\colon \mathbb{R}\rightarrow \mathbb{R}_{+}$ is defined as 
$$
T_{\tau}[\bar{\psi}](x)=
\sup_{S\ni x,~0<\abs{S}\leq \tau} \frac{1}{\abs{S}}\int_S \norm{\bar{\psi}(x)-\bar{\psi}(z)}dz.
$$
 Using Lebesgue differentiation theorem \cite[Theorem~2.3.4,~p.~101]{HytonenVanNeervenVeraarWeis16}, we have $
T_{\tau}[\bar{\psi}](s)\rightarrow 0,~a.e.~s\in \intcc{\underline{t},\overline{t}} 
$
as $\tau\rightarrow 0^{+}$ ($N\rightarrow \infty$). 
Therefore, using  squeeze theorem,
$
M^{\#}_{\tau}[\bar{\psi}](s)\rightarrow 0,~a.e.~s\in \intcc{\underline{t},\overline{t}} 
$
as $\tau\rightarrow 0^{+}$ ($N\rightarrow \infty$). 
Finally, using  dominated convergence theorem \cite[Proposition~1.2.5,~p.~16]{HytonenVanNeervenVeraarWeis16}, we have 
$
\int_{\underline{t}}^{\overline{t}} M^{\#}_{\tau}[\bar{\psi}](s)ds\rightarrow 0
$
as $N\rightarrow \infty$ which completes the proof.

\end{proof}
\begin{proof}[Proof of Theorem \ref{th:UnderApproximationReachTube}]
Let  $\{t_{i}\}_{i=0}^{N}$ be as in \ref{e:t&W},  $\beta(t,s)=\int_{s}^{t}\norm{B(s)}ds,~\underline{t}\leq s\leq t\leq \overline{t}$, and $M=\int_{\underline{t}}^{\overline{t}}\norm{A(z)}dz$. Using theorem \ref{th:UnderApproximationReachSet}, we have $\bigcup_{i=0}^{N}\Lambda_{i}^{N}$ $\subseteq$ $\bigcup_{i=0}^{N}\mathcal{R}(t_{i})$ $\subseteq$ $\mathcal{R}(\intcc{\underline{t},\overline{t}})$ which proves the first claim. Now, we prove the second claim. First we note, using the definition of reachable sets and \ref{e:ExpEstimates}, that
\begin{equation}\label{eq:Growthbound}
\norm{\mathcal{R}(t)}\leq K,~t\in \intcc{\underline{t},\overline{t}},
\end{equation}
where
$
K=\e^{M}\left(\norm{X_{0}}+\norm{U}\beta(\overline{t},\underline{t}) \right)
$.
The Hausdorff distance can be estimated as
\begin{equation}\label{eq:HausEsim}
d_{H}(\bigcup_{i=0}^{N} \Lambda_{i}^{N}, \mathcal{R}(\intcc{\underline{t},\overline{t}})) \leq\max_{i\in \intcc{0;N-1}} \sup_{t\in \intcc{t_{i},t_{i+1}}} d_{H}(\mathcal{R}(t),\Lambda_{i}^{N}).
\end{equation}
Let $\varepsilon>0$ be arbitrary. Choose $N\in \mathbb{N}$ sufficiently large such that:
\begin{flalign}
\label{eq:a1}
d_{H}(\Lambda_{i}^{N},\mathcal{R}(t_{i}))\leq \frac{\varepsilon}{2},~i\in \intcc{0;N},\\
\label{eq:a2}
\norm{\phi(t,t_{i})-\id}\leq \frac{\varepsilon}{4K},~t\in  [t_{i},t_{i+1}],~i\in \intcc{0;N-1},\\
\label{eq:a3}
\norm{U}\beta(t,t_{i}) \leq \frac{\varepsilon}{4  \e^{M}},~t\in [t_{i},t_{i+1}],~i\in \intcc{0;N-1}.
\end{flalign}
Such $N$ exists due to theorem \ref{th:UnderApproximationReachSet}, the \underline{uniform} continuity of $\phi$ on $\intcc{\underline{t},\overline{t}}\times\intcc{\underline{t},\overline{t}}$, and the fact that, using Holder's inequality \cite[Theorem~1,~p.~140]{RoydenFitzpatrick88} , $\beta(t,s)\leq (t-s)^{1/q}\left(\int_{\underline{t}}^{\overline{t}}\norm{B(s)}^{p}ds\right)^{1/p}$, where $q\in [1,\infty)$ satisfies $1/p+1/q=1$ and $p$ is the same as in subsection \ref{subseq:Assumptions}, assumption (iii). Let  $i\in \intcc{0;N-1}$ and 
$t\in \intcc{t_{i},t_{i+1}}$. Using the triangular inequality, we have  
\begin{equation}\label{eq:TriangularIneq}
d_{H}(\mathcal{R}(t),\Lambda_{i}^{N})
\leq 
d_{H}(\mathcal{R}(t),\mathcal{R}(t_{i}))
+
d_{H}(\mathcal{R}(t_{i}),\Lambda_{i}^{N})
\end{equation}
Using lemma \ref{lem:SemiGroup}, $\mathcal{R}(t)$ can be written as
$
\mathcal{R}(t)= \phi(t,t_{i})\mathcal{R}(t_{i})+ \int_{t_{i}}^{t}\phi(t,s)B(s)Uds.
$
Hence, using \ref{e:ExpEstimates}, \ref{eq:Growthbound}, \ref{eq:a2}, and \ref{eq:a3}, 
\begin{equation}\label{eq:d(t,ti)}
\begin{split}
d_{H}(R(t),\mathcal{R}(t_{i}))
\leq&\norm{\phi(t,t_{i})-\id}\norm{\mathcal{R}(t_{i})}\\
&+
\norm{U}\int_{t_{i}}^{t}\norm{\phi(t,s)}\norm{B(s)}ds\\
&\leq 
\norm{\phi(t,t_{i})-\id}K+\norm{U} \beta(t,t_{i})\e^{M}\\
& \leq \frac{\varepsilon}{4}+\frac{\varepsilon}{4}=\frac{\varepsilon}{2}.
\end{split}
\end{equation}
 Therefore, using \ref{eq:d(t,ti)}, \ref{eq:a1}, and \ref{eq:TriangularIneq}, 
$
d_{H}(\mathcal{R}(t),\Lambda_{i}^{N})\leq \frac{\varepsilon}{2}+\frac{\varepsilon}{2}= \varepsilon.
$
As $\varepsilon$ is arbitrary and in view of \ref{eq:HausEsim}, the proof is complete.
\end{proof}
\subsection{Implementation}
To enable implementing  the  proposed method, it is required that the initial and input sets, $X_{0}$ and $U$, belong to a class of sets that is closed under linear transformations and Minkowski sums. Such requirement is satisfied by the class of zonotopes (see, e.g., \cite[Section~2]{Althoff10}), and hence, the proposed method is implementable using zonotopes. Zonotopic approaches are well-established in reachability analysis due to the reasonable computational costs of the associated  Minkowski sums and linear transformations \cite{GirardLeGuernicMaler06}. For a zonotopic implementation of the proposed method, we refer the readers to, e.g., \cite[Section~3]{GirardLeGuernicMaler06}.
\section{Numerical Examples}
In this section, we demonstrate the proposed method via two numerical examples. 
For both examples, the closed-form formulas of the  transition matrices are known exactly. 
The proposed method is implemented using zonotopes with the help of CORA software \cite{Althoff15}.
\subsection{Academic Example}\label{subsec:Academic}
Consider the LTV system $\dot{x}=A(t)x+B(t)u(t)$
over the time interval $\intcc{\underline{t},\overline{t}}=\intcc{0,1}$, where $A(t)=\alpha(t)\id$,
$$
\alpha(t)=\begin{cases}0,&t=0,\\
\frac{1}{2\sqrt{t}},& \text{otherwise},
\end{cases}
$$
$$
B(t)=\e^{\sqrt{t}} \left(\begin{array}{cc}
    \cos(t) & -\sin(t) \\
     \sin(t)& \cos(t)
\end{array}\right),
$$
$x(\underline{t})\in X_{0}=\{(0,0)^{\intercal}\}$ and $u(t)\in U=\intcc{-1,1}\times \intcc{-1,1}
$. Note that $A(\cdot)$ is integrable (but not essentially bounded).
We  implement the proposed method, with different values of the time discretization parameter $N$, to obtain several under-approximations of $\mathcal{R}(\overline{t})$, $\Lambda_{N}^{N}$. \ref{fig:Academic} illustrates the obtained under-approximations which improve in accuracy as $N$ increases showing a converging behaviour as predicted by theorem \ref{th:UnderApproximationReachSet}.
\begin{figure}[htbp]
\centerline{
\includegraphics[height=2in
,keepaspectratio=true]{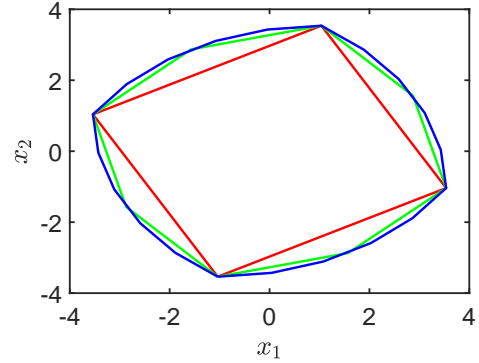}
} 
\centering
 \caption{Several under-approximations of $\mathcal{R}(\overline{t})$ (black), from example \ref{subsec:Academic}: $\Lambda^1_{1}$ (red), $\Lambda_{2}^{2}$ (green), and $\Lambda_{5}^{5}$ (blue).}
 \label{fig:Academic}
 \end{figure}

\subsection{Switched System}\label{subsec:DCDC}

In this example, we adopt a modified version of the DC-DC converter given \cite{GirardPolaTabuada10}.  Consider the LTV system 
$\dot{x}=A(t)x+u(t)$ on the time interval $\intcc{\underline{t},\overline{t}}=\intcc{0,5}$, where $x(\underline{t})\in X_{0}=\intcc{0.9,1.1}\times \intcc{4.9,5.1}$ and $u(t)\in U=\intcc{2/15
,8/15
}\times \{0\}$. $A(\cdot)$ is defined as

$$
A(t)=\begin{cases} A^{(1)},&~t\in \intcc{0,1}\cup\intcc{2,3},\\
A^{(2)},&~\text{otherwise},
\end{cases}
$$
where
$
A^{(1)}=\left(\begin{array}{cc} -\frac{1}{3} & 0\\ 0 & -\frac{1}{6} \end{array}\right)
$
and
$
A^{(2)}=    \left(\begin{array}{cc} -\frac{1}{2} & -\frac{1}{6}\\ \frac{1}{6} & -\frac{1}{6} \end{array}\right).
$
We use the proposed method, with several values of $N$, to obtain an under-approximations, $\bigcup_{i=0}^{N}\Lambda_{i}^{N}$, of $\mathcal{R}(\intcc{\underline{t},\overline{t}})$. \ref{fig:DCDC} illustrates several under-approximations of $\mathcal{R}(\intcc{\underline{t},\overline{t}})$. The mentioned figure displays the increased accuracy of the obtained under-approximations as $N$ increases which matches with the findings of theorem \ref{th:UnderApproximationReachTube}. 
\begin{figure}[htbp]
\centerline{
\includegraphics[height=1.3in
,keepaspectratio=true]{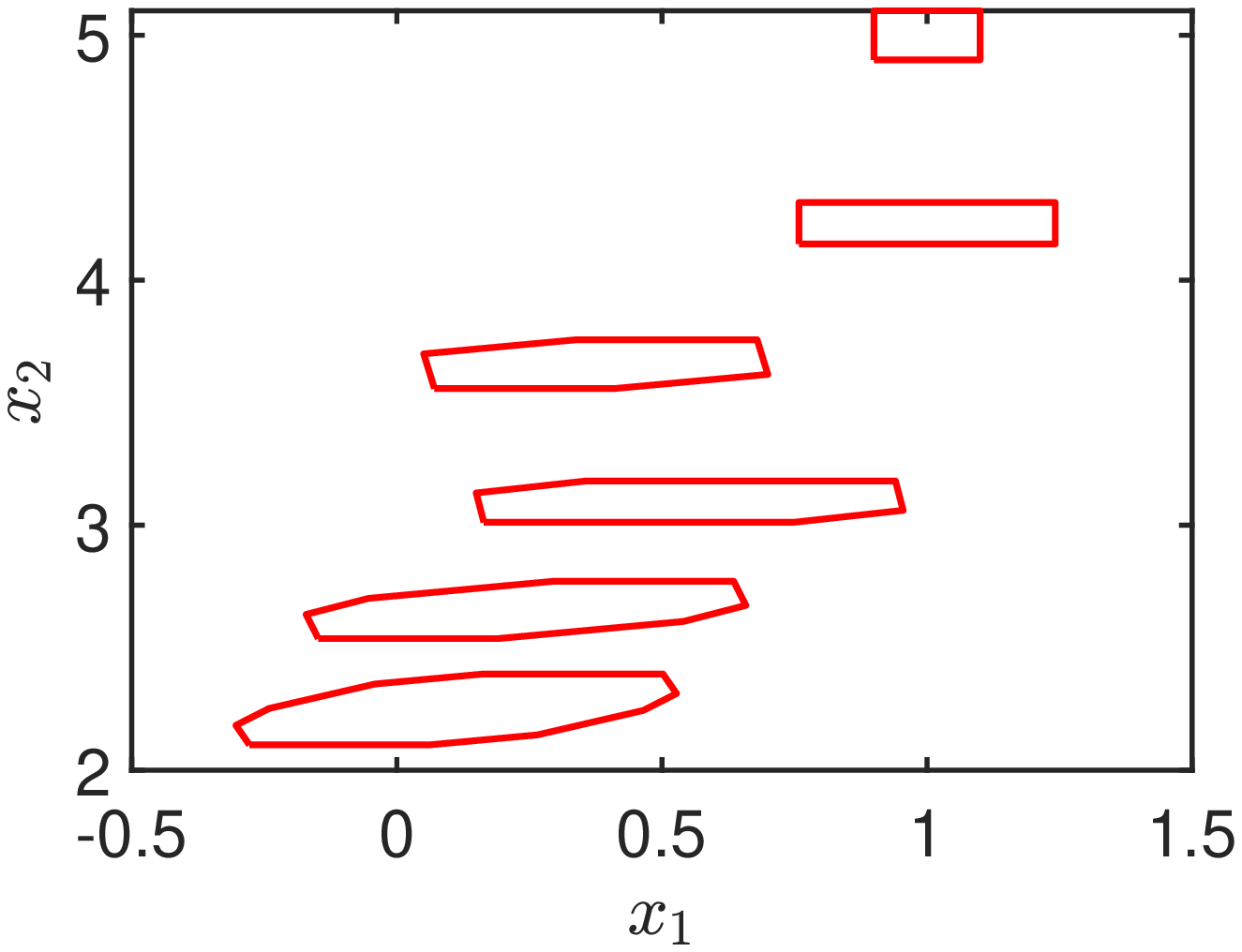}
\includegraphics[height=1.3in
,keepaspectratio=true]{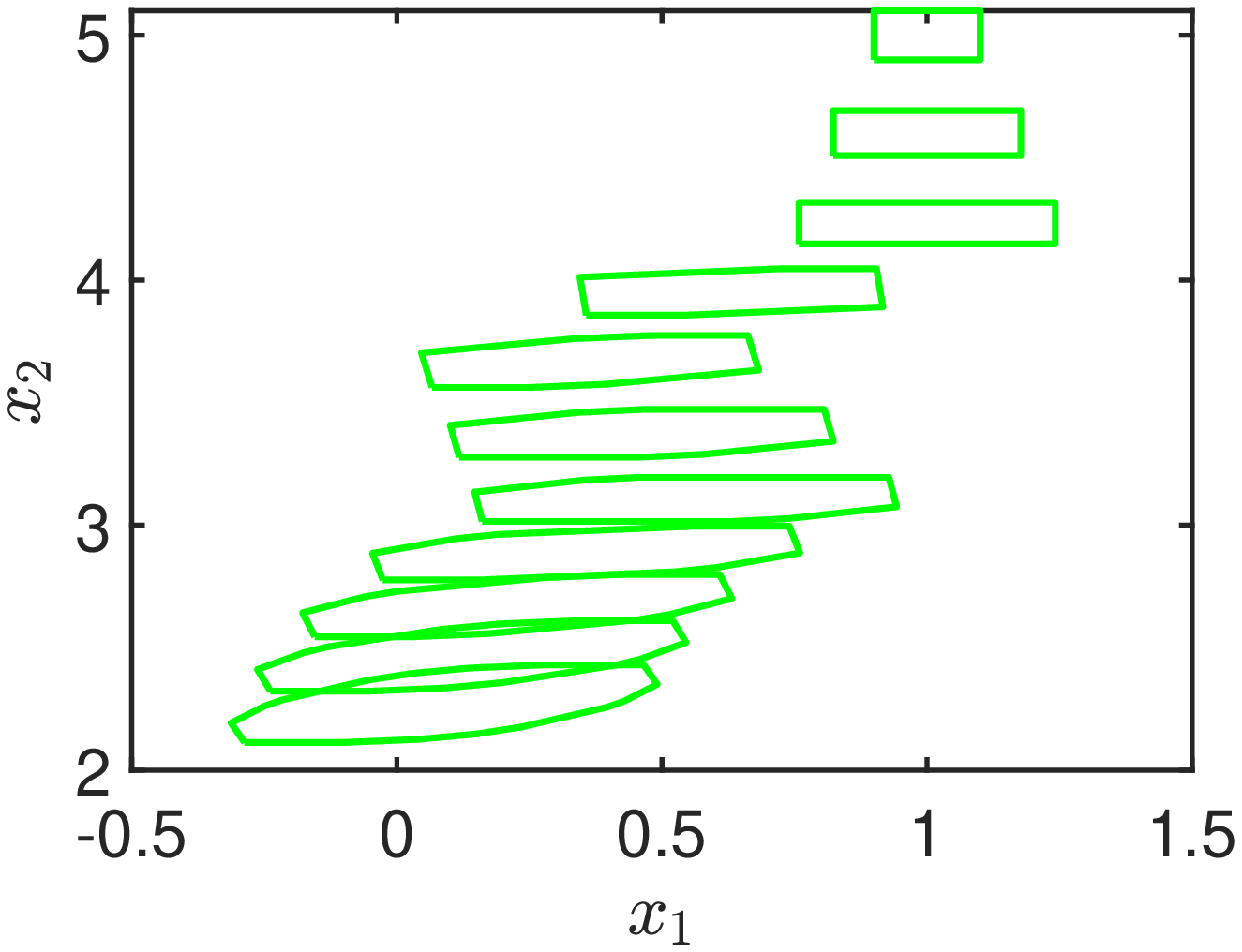}
}
\centerline{
\includegraphics[height=1.3in
,keepaspectratio=true]{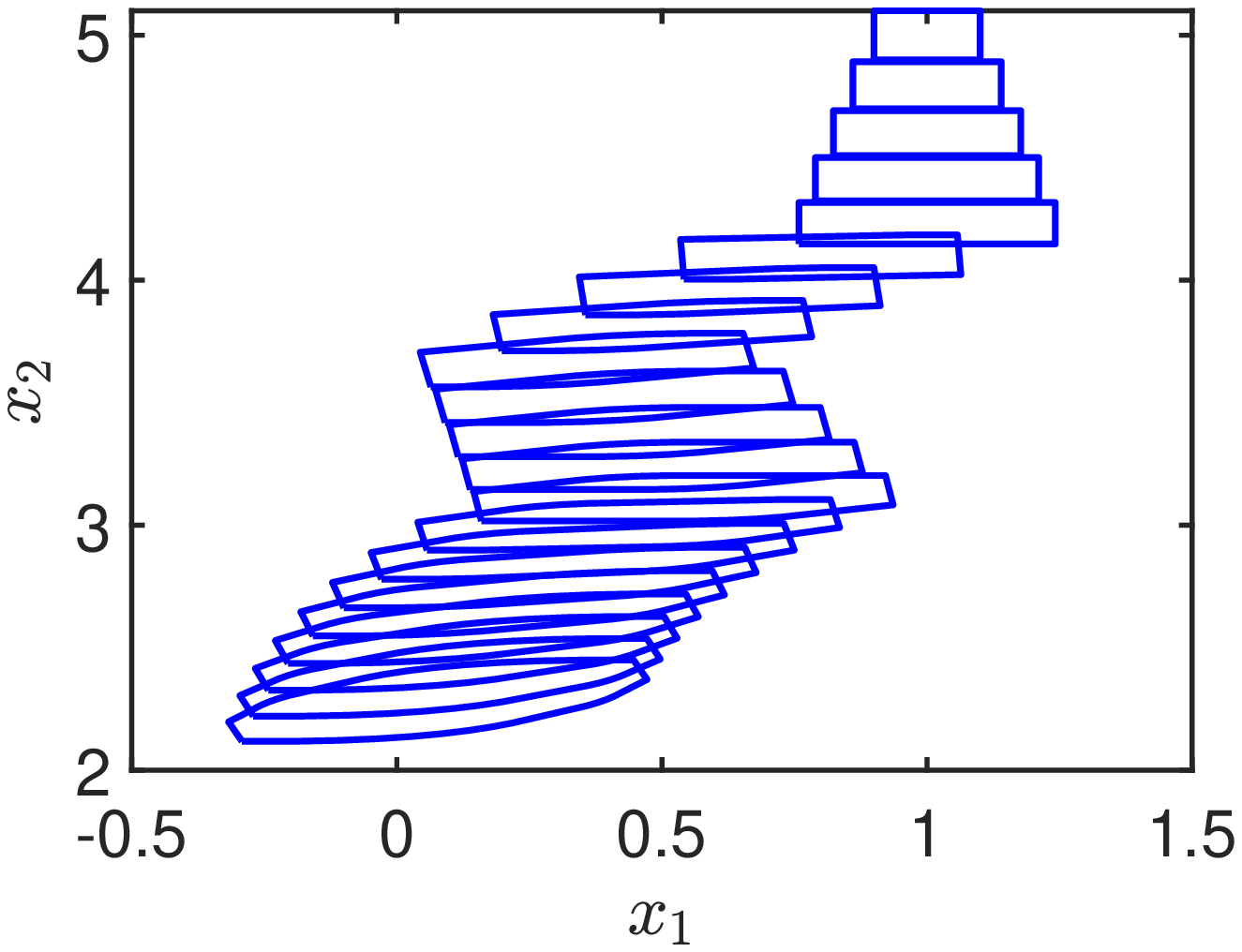}
\includegraphics[height=1.3in
,keepaspectratio=true]{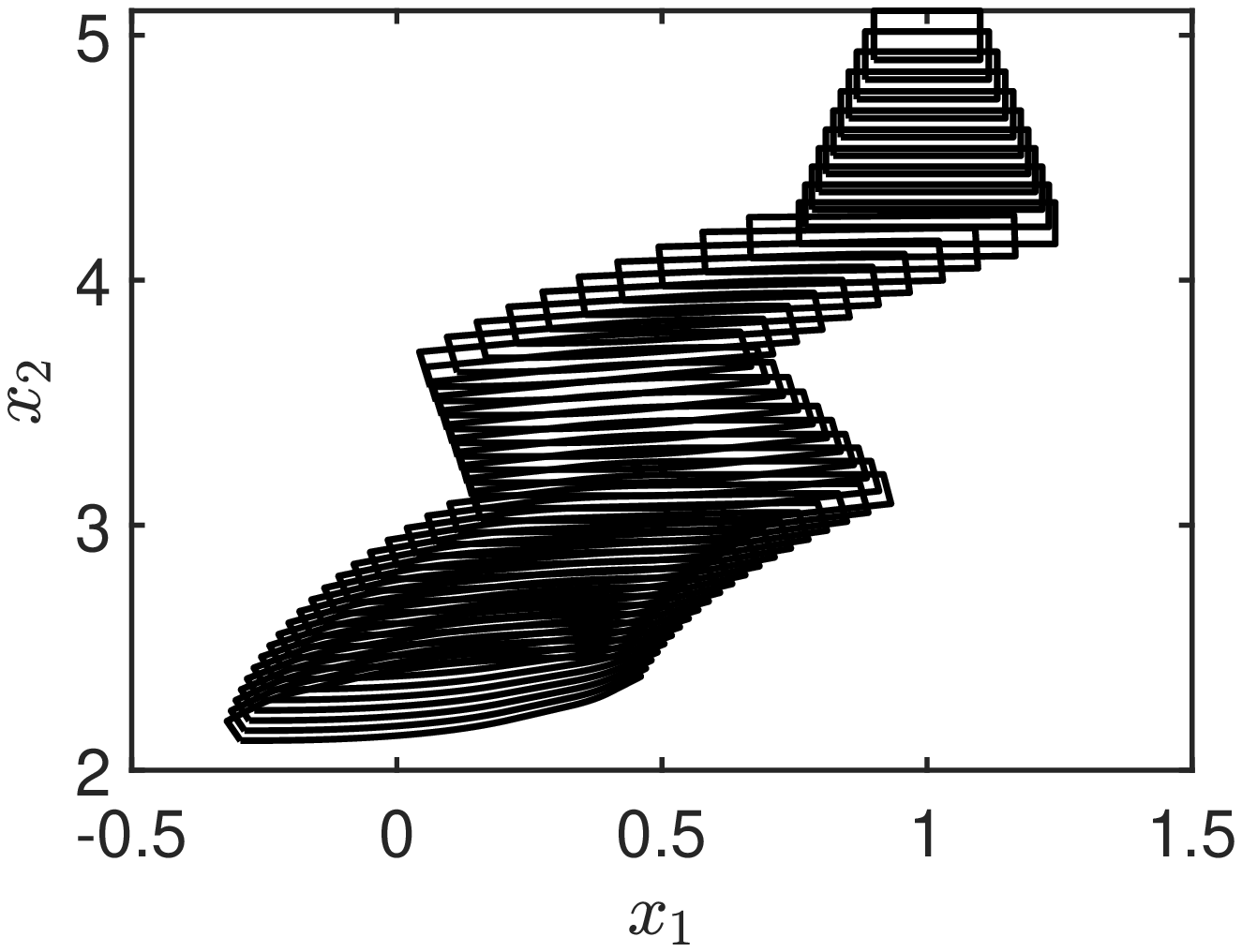}
}

\centering
 \caption{Under-approximations  of  $\mathcal{R}(\intcc{\underline{t},\overline{t}})$, from example \ref{subsec:DCDC}: $\bigcup_{i=0}^{5}\Lambda_{i}^{5}$ (red), $\bigcup_{i=0}^{10}\Lambda_{i}^{10}$ (green), $\bigcup_{i=0}^{20}\Lambda_{i}^{20}$ (blue), $\bigcup_{i=0}^{50}\Lambda_{i}^{50}$ (black). }
 \label{fig:DCDC}
 \end{figure}
\bibliographystyle{plain}        
\bibliography{IEEEtranBSTCTL,preambles,mrabbrev,strings,fremde,eigeneCONF,eigeneJOURNALS,eigenePATENT,eigeneREPORTS,eigeneTALKS,eigeneTHESES,fremdetmp,referencesstudents}

\end{document}